\documentclass[letterpaper]{article} % DO NOT CHANGE THIS
\usepackage[preprint]{aaai2027}  % DO NOT CHANGE THIS
\usepackage[hyphens]{url}  % DO NOT CHANGE THIS
\usepackage{graphicx} % DO NOT CHANGE THIS
\usepackage{natbib}  % DO NOT CHANGE THIS AND DO NOT ADD ANY OPTIONS TO IT
\usepackage{caption} % DO NOT CHANGE THIS AND DO NOT ADD ANY OPTIONS TO IT
\usepackage[linesnumbered,ruled,lined,noend]{algorithm2e}

\SetKwComment{Comment}{// }{}

\usepackage{booktabs}

\usepackage{amsmath}
\usepackage{amssymb}
\usepackage{amsthm}
\usepackage{multirow}
\usepackage{subcaption}
\usepackage{tikz}
\usepackage[dvipsnames,table]{xcolor}
\usepackage{dsfont}
\usepackage{mathtools}

\usetikzlibrary{shapes,arrows}

\newtheorem{corollary}{Corollary}
\newtheorem{proposition}{Proposition}
\newtheorem{theorem}{Theorem}

\theoremstyle{remark}
\newtheorem{remark}{Remark}

\allowdisplaybreaks
\newcommand{\argmax}{\operatornamewithlimits{argmax}}

\title{Watermarked Game Solving via Perturbed Regret Minimization}
\author{
    Juho Kim\textsuperscript{\rm 1},
    Tuomas Sandholm\textsuperscript{\rm 1,2,3,4}
}
\affiliations{
    \textsuperscript{\rm 1}Computer Science Department, CMU\\
    \textsuperscript{\rm 2}Strategic Machine, Inc.\\
    \textsuperscript{\rm 3}Strategy Robot, Inc.\\
    \textsuperscript{\rm 4}Optimized Markets, Inc.\\
    \{juhok,sandholm\}@cs.cmu.edu
}

\begin{document}

\maketitle

\begin{abstract}
	Many real-world interactions among self-interested parties can be modeled by game theory, and the rapid advancements in AI have raised concerns about the possible misuse---accidental or deliberate---of superhuman or human-level game-playing agents by bad actors.
	While AI watermarking has mainly been applied to LLM-generated texts, a recent line of work proposes developing watermarking techniques for agents in game-theoretic settings.
	However, existing watermarking techniques for game-theoretic agents are not readily applicable due to their limited scope or capabilities---they are tailored to perfect-information games and are thus inapplicable to richer game types.
	We propose a new approach to watermarking game-playing agents, which a) can be applied to imperfect-information settings; b) is directly integrated into the learning process itself; and c) incurs only a bounded cost in exploitability.
	For this purpose, we introduce perturbed regret minimization, which adds perturbations to the utilities prior to observation so as to encourage the learning algorithm to embed the watermark.
	Our experiments show that the watermark incurs only a small exploitability cost and can be detected within just a couple of hours of gameplay at human speed.
\end{abstract}

% Uncomment the following to link to your code, datasets, an extended version or similar.
% You must keep this block between (not within) the abstract and the main body of the paper.
% Make sure that you do not de-anonymize yourself with these links.
% \begin{links}
%     \link{Code}{https://aaai.org/example/code}
%     \link{Datasets}{https://aaai.org/example/datasets}
%     \link{Extended version}{https://aaai.org/example/extended-version}
% \end{links}

\section{Introduction}

Recent advances in AI have given rise to remarkably capable models that radically transform the world. 
However, they also have the potential to disrupt society by empowering bad actors to automate cyberattacks~\cite{anthropic}, execute large-scale phishing scams~\cite{schulz}, spread mis/disinformation~\cite{qiaoetal,lancet}, make biological weapons~\cite{wellsandramkumar}, \textit{etc}.
To at least partially address these risks, the AI research community has paid significant attention to watermarking large language models (LLMs), which `mark' the output of an LLM with a hidden, robust signature to enable verification of whether or not a piece of text was generated using the LLM~\cite{dathathrietal,kirchenbaueretal,zhao,takezawaetal,huoetal,wangetal}.
However, the potential societal harms of AI are not limited to those involving LLMs.

Indeed, considering how many real-life interactions between self-interested parties in domains like business, finance, political science, economics, national defense, \textit{etc.} can be modeled by game theory, many important areas of AI misuse involve game-playing agents.
One possible application of a watermark for game-playing agents is \textit{detecting unauthorized use of AI tools} in online platforms, for example, those for recreational games like chess and poker that already face a widespread cheating problem in terms of botting~\cite{chappell,grant}.
Another more high-stakes scenario of its potential use includes \textit{preventing or detecting intellectual property (IP) theft} (\textit{e.g.}, license noncompliance~\cite{stockfish}, cybersecurity breaches, or distillation learning): the mere presense of a watermark can deter model theft; or, by collecting information about a competitor's model in the wild, one can ascertain whether it one's own model or at least derived from it or not~\cite{sander}.
Additionally, a researcher can use a watermark detector to \textit{filter out non-human data} from a dataset of `gameplays' so it can be used to study human behavior.

Addressing this gap,~\citet{kimetal} recently connected the problem of text generation with game playing to extend LLM watermarking techniques, namely the KGW watermark~\cite{kirchenbaueretal}, for watermarking game-playing agents.
However, their contribution works for perfect-information games only and thus fails to adequately address the myriad of settings that occur in real-life, which include the presence of private information only available to a subset of the parties involved.
Their technique cannot be applied in such imperfect-information games.

In this paper, we explore the possibility of simultaneous watermarking and (imperfect-information) game solving, thus embedding the watermark during the learning process.
In our methodology, we focus on the framework of regret minimization, which is the most common family of techniques for solving games in a manner that is both theoretically sound and amenable to large-scale settings: this family of algorithms was instrumental to creating near-optimal or superhuman agents for playing milestone imperfect-information games like poker~\cite{bowlingetal,moravciketal,brownandsandholm2018,brownandsandholm2019}, The Resistance: Avalon~\cite{serrinoetal}, and dark chess~\cite{zhangandsandholm}. 
While our watermarking technique shares some core ideas with the foundational KGW watermark for LLMs~\cite{kirchenbaueretal} and its recent adaptation for perfect-information games~\cite{kimetal}, we introduce several important differences from the prior works.
Most notably, the watermark is learned through what we call \textit{perturbed regret minimization}, which applies a small perturbation to utilities prior to observation, thereby encouraging certain actions to be played more frequently than others.
In our analysis, we provide a bound for the exploitability incurred by watermarking, and show the fundamental tradeoff between the exploitability cost and watermark detectability.
In our experiments, we apply our technique to selections of large games and show that a) the incurred exploitability cost is small and b) the watermark can be detected already within a couple of hours of gameplay at human speed.

\section{Notation and Background}

In this section, we define the notation used throughout this paper and review game solving and watermarking.

\subsection{Extensive-Form Games (EFGs)}

An \textit{extensive-form game (EFG)} contains a finite set of histories $H$ and players $P$.
Chance $p_c$ is a special type of player not in $P$ that represents random events.
Every $h \in H$ is a sequence of actions taken by each player, and we denote the player in turn to act on $h$ as $p(h)$ and the set of available actions on $h$ as $A(h)$.
Further, we use $h \cdot a = h'$ to denote that history $h'$ is reached by applying $a \in A(h)$ at $h$.
If $p(h) = p_c$, that is, $h$ is a chance node, then $f_c(h, a)$ gives a fixed probability distribution over $A(h)$.
Every terminal history $z \in Z \subseteq H$ is associated with a utility $u_i(z)$ for each non-chance player $i \in P$.

In EFGs, private information belonging to a subset of players is represented by partitioning non-chance histories into \textit{information sets (infosets)} $\mathcal{I}$.
For each $i \in P$, let $\mathcal{I}_i$ be a subset of $\mathcal{I}$ belonging to $i$.
Here, $i$ cannot distinguish between any $h, h' \in I \in \mathcal{I}_i$, so $A(h) = A(h')$, and we denote the common set by $A(I)$.
Each $i \in P$ plays with strategy $\sigma_i$, which, for any $I \in \mathcal{I}_i$, gives a probability distribution over $A(I)$.
$\sigma = (\sigma_1, \ldots, \sigma_n)$ is a strategy profile.

In \textit{two-player zero-sum (2p0s) games}, there are two players and the sum of player utilities at any terminal node is zero.

\subsection{Nash Equilibrium (NE)}

A traditional solution concept in game theory is \textit{Nash equilibrium (NE)}, where no player can gain by deviating, \textit{i.e.}, an NE is a strategy profile $\sigma^*$ such that
\[
	\forall i \in P, \sigma_i \in S_i: u_i(\sigma^*) \ge u_i(\sigma_i, \sigma_{-i}^*),
\]
where $S_i$ is the strategy space of player $i$.
An NE is guaranteed to exist~\cite{nash}.
Its approximation is \textit{$\epsilon$-Nash equilibrium ($\epsilon$-NE)}, where the amount any player can gain by deviating is bounded, \textit{i.e.}, an $\epsilon$-NE is a strategy profile $\sigma^*$ such that
\[
	\forall i \in P, \sigma_i \in S_i: u_i(\sigma^*) + \epsilon \ge u_i(\sigma_i, \sigma_{-i}^*).
\]

In 2p0s games, a common way to measure the quality of a strategy profile is \textit{exploitability}, which, intuitively, is the strategic distance between a given strategy profile and an NE. 
The exploitability of a strategy profile $\sigma^*$ is defined as follows:
\[
	\epsilon = \frac1{|P|} \sum_{i \in P} \max_{\sigma_i \in S_i} u_i(\sigma_i, \sigma_{-i}^*) - u_i(\sigma^*).
\]
Note that a strategy profile with exploitability $\epsilon$ is a $2\epsilon$-NE.

\subsection{Tree-Form Sequential Decision Process (TFSDP)}

\begin{figure*}[t!]
	\centering
	\begin{tikzpicture}
		\begin{scope}
			\node[rectangle, draw, minimum size=1.75em, thick, inner sep=0, fill=blue!40] (h1) at (2.85, 0) {\small 1};

			\node[isosceles triangle, isosceles triangle apex angle=60, draw, shape border rotate=90, minimum size=1.75em, thick, inner sep=0, fill=red!40] (h2) at (1.35, -1) {\small 2};
			\node[isosceles triangle, isosceles triangle apex angle=60, draw, shape border rotate=90, minimum size=1.75em, thick, inner sep=0, fill=red!40] (h3) at (4.35, -1) {\small 2};

			\node[rectangle, draw, minimum size=1.75em, thick, inner sep=0, fill=blue!40] (h4) at (0.3, -2) {\small 1};
			\node[rectangle, draw, minimum size=1.75em, thick, inner sep=0, fill=blue!40] (h5) at (1.5, -2) {\small 1};
			\node[diamond, draw, minimum size=1em, thick, inner sep=0] (z1) at (2.4, -2) {};
			\node[rectangle, draw, minimum size=1.75em, thick, inner sep=0, fill=blue!40] (h6) at (3.3, -2) {\small 1};
			\node[rectangle, draw, minimum size=1.75em, thick, inner sep=0, fill=blue!40] (h7) at (4.5, -2) {\small 1};
			\node[diamond, draw, minimum size=1em, thick, inner sep=0] (z2) at (5.4, -2) {};

			\node[diamond, draw, minimum size=1em, thick, inner sep=0] (z3) at (0, -3) {};
			\node[diamond, draw, minimum size=1em, thick, inner sep=0] (z4) at (0.6, -3) {};
			\node[diamond, draw, minimum size=1em, thick, inner sep=0] (z5) at (1.2, -3) {};
			\node[diamond, draw, minimum size=1em, thick, inner sep=0] (z6) at (1.8, -3) {};
			\node[diamond, draw, minimum size=1em, thick, inner sep=0] (z7) at (3, -3) {};
			\node[diamond, draw, minimum size=1em, thick, inner sep=0] (z8) at (3.6, -3) {};
			\node[diamond, draw, minimum size=1em, thick, inner sep=0] (z9) at (4.2, -3) {};
			\node[diamond, draw, minimum size=1em, thick, inner sep=0] (z10) at (4.8, -3) {};

			\draw[thick] (h1) -- (h2);
			\draw[thick] (h1) -- (h3);

			\draw[thick] (h2) -- (h4);
			\draw[thick] (h2) -- (h5);
			\draw[thick] (h2) -- (z1);
			\draw[thick] (h3) -- (h6);
			\draw[thick] (h3) -- (h7);
			\draw[thick] (h3) -- (z2);

			\draw[thick] (h4) -- (z3);
			\draw[thick] (h4) -- (z4);
			\draw[thick] (h5) -- (z5);
			\draw[thick] (h5) -- (z6);
			\draw[thick] (h6) -- (z7);
			\draw[thick] (h6) -- (z8);
			\draw[thick] (h7) -- (z9);
			\draw[thick] (h7) -- (z10);

			\draw[dashed, thick] (h4) -- (h5);
		\end{scope}

		\begin{scope}[shift={(6.6, 0)}]
			\node[rectangle, draw, minimum size=1.75em, thick, inner sep=0, fill=blue!40] (j1) at (1.2, 0) {\small 1};

			\node[rectangle, draw, minimum size=1.75em, thick, inner sep=0, fill=blue!40] (j2) at (0.3, -1) {\small 1};
			\node[circle, draw, minimum size=1em, thick, inner sep=0] (k1) at (2.1, -1) {};

			\node[diamond, draw, minimum size=1em, thick, inner sep=0] (z1) at (0, -2) {};
			\node[diamond, draw, minimum size=1em, thick, inner sep=0] (z2) at (0.6, -2) {};
			\node[rectangle, draw, minimum size=1.75em, thick, inner sep=0, fill=blue!40] (j3) at (1.5, -2) {\small 1};
			\node[rectangle, draw, minimum size=1.75em, thick, inner sep=0, fill=blue!40] (j4) at (2.7, -2) {\small 1};

			\node[diamond, draw, minimum size=1em, thick, inner sep=0] (z3) at (1.2, -3) {};
			\node[diamond, draw, minimum size=1em, thick, inner sep=0] (z4) at (1.8, -3) {};
			\node[diamond, draw, minimum size=1em, thick, inner sep=0] (z5) at (2.4, -3) {};
			\node[diamond, draw, minimum size=1em, thick, inner sep=0] (z6) at (3, -3) {};

			\draw[thick] (j1) -- (j2);
			\draw[thick] (j1) -- (k1);

			\draw[thick] (j2) -- (z1);
			\draw[thick] (j2) -- (z2);
			\draw[thick] (k1) -- (j3);
			\draw[thick] (k1) -- (j4);

			\draw[thick] (j3) -- (z3);
			\draw[thick] (j3) -- (z4);
			\draw[thick] (j4) -- (z5);
			\draw[thick] (j4) -- (z6);
		\end{scope}

		\begin{scope}[shift={(10.6,0)}]
			\node[circle, draw, minimum size=1em, thick, inner sep=0] (k1) at (1.5, -0.5) {};

			\node[isosceles triangle, isosceles triangle apex angle=60, draw, shape border rotate=90, minimum size=1.75em, thick, inner sep=0, fill=red!40] (j1) at (0.6, -1.5) {\small 2};
			\node[isosceles triangle, isosceles triangle apex angle=60, draw, shape border rotate=90, minimum size=1.75em, thick, inner sep=0, fill=red!40] (j2) at (2.4, -1.5) {\small 2};

			\node[diamond, draw, minimum size=1em, thick, inner sep=0] (z1) at (0, -2.5) {};
			\node[diamond, draw, minimum size=1em, thick, inner sep=0] (z2) at (0.6, -2.5) {};
			\node[diamond, draw, minimum size=1em, thick, inner sep=0] (z3) at (1.2, -2.5) {};
			\node[diamond, draw, minimum size=1em, thick, inner sep=0] (z4) at (1.8, -2.5) {};
			\node[diamond, draw, minimum size=1em, thick, inner sep=0] (z5) at (2.4, -2.5) {};
			\node[diamond, draw, minimum size=1em, thick, inner sep=0] (z6) at (3.0, -2.5) {};

			\draw[thick] (k1) -- (j1);
			\draw[thick] (k1) -- (j2);

			\draw[thick] (j1) -- (z1);
			\draw[thick] (j1) -- (z2);
			\draw[thick] (j1) -- (z3);
			\draw[thick] (j2) -- (z4);
			\draw[thick] (j2) -- (z5);
			\draw[thick] (j2) -- (z6);
		\end{scope}
	\end{tikzpicture}
	\caption{
		The figure on the left represents an example 2p0s EFG.
		Square, triangle, and diamond nodes represent Player 1, Player 2, and terminal nodes, respectively.
		Edges represent actions, while the dashed line connects nodes in the same infoset.
		The middle and right figures represent the TFSDPs for Players 1 and 2, respectively.
		Square, circle, and diamond nodes represent decision points, observation points, and ends of the decision process, respectively.
		Edges represent actions or signals.
	}
	\label{fig:efg-tfsdp}
\end{figure*}

In EFGs, a player $i \in P$ faces a \textit{tree-form sequential decision process (TFSDP)}, as demonstrated in Figure~\ref{fig:efg-tfsdp}.
In a TFSDP, each node $p \in \mathcal P$ either is an end of the decision process $\bot$ or is in the set of decision points $\mathcal J$ or the set of observation points $\mathcal K$.
At a decision point $j \in \mathcal J$, an available action $a \in \mathcal A_j$ can be applied, and $\rho(j, a)$ denotes the child node reached from $j$ by applying $a$.
Similarly, a signal $s \in \mathcal{S}_k$ can be observed at an observation point $k \in \mathcal K$, and the corresponding child is $\rho(k, s)$.
Let $\Sigma^+ = \left\{(j, a): \forall j \in \mathcal J, a \in \mathcal A_j\right\}$ be the set of non-empty sequences, and let $\Sigma = \Sigma^+ \cup \{\emptyset\}$ be the set of (possibly empty) sequences, where $\emptyset$ is the empty sequence.
Note that decision points and infosets are equivalent.

Each time a TFSDP ends, the player observes utilities $\vec u \in \mathbb R^\Sigma$.
The utility associated with an empty sequence $\emptyset$ represents the utility realized when the TFSDP ends before any action is taken.
The utility associated with a non-empty sequence $(j, a) \in \Sigma^+$ represents the utility realized when the TFSDP ends immediately after taking $a$ at $j$.
Note that each element in $\vec u$ is already weighted by the probability of realization due to factors outside the TFSDP (\textit{e.g.}, strategies of other players and chance probabilities).

For each decision point $j$, let $p_j$ be its parent sequence (if $j$ does not have an ancestor decision point, then $p_j = \emptyset$).
For each $i \in P$, the strategy space $S_i$ is a sequence-form polytope~\cite{vonstengel,kolleretal}, defined as follows:
\[
	\resizebox{\columnwidth}{!}{$
		\mathcal X_i = \left\{\vec x \in \mathbb R_{\ge0}^\Sigma: \vec x[\emptyset] = 1, \forall j \in \mathcal J: \sum_{a \in \mathcal A_j} \vec x[(j, a)] = \vec x[p_j]\right\}.
	$}
\]
Thus, a learning algorithm that solves a TFSDP is said to operate over its sequence-form polytope.
One distinct advantage of representing a strategy in sequence-form is that, given a strategy $\vec x \in \mathcal X_i$ and a utility vector $\vec u \in \mathbb R^\Sigma$, the expected utility is simply $\langle\vec u, \vec x\rangle$, that is, a linear function.

\subsection{Regret Minimization}

The framework of \textit{regret minimization} has been integral to developing theoretically sound methods for solving games.
At each time $t \in \mathbb N^+$, a regret minimization algorithm outputs a strategy $\vec x^{(t)} \in \mathcal X$, where the strategy space $\mathcal X \subseteq \mathbb R^n$ is a compact and convex subset of the Euclidean space with some dimension $n$.
Then, the learning algorithm observes a linear utility vector $\vec u^{(t)}$ and receives utility $\langle\vec u^{(t)}, \vec x^{(t)}\rangle$.
The learning algorithm's performance is evaluated by its cumulative regret up to time $T \in \mathbb N$, defined as follows:
\[
	R_{\mathcal X}^{(T)} = \max_{\vec x \in \mathcal X} \sum_{t = 1}^T \langle\vec u^{(t)}, \vec x\rangle - \sum_{t = 1}^T \langle\vec u^{(t)}, \vec x^{(t)}\rangle.
\]
A regret minimizer is a learning algorithm that satisfies Hannan consistency: its average regret up to time $T$ tends to zero when taking the limit $T \to \infty$, that is, $R_{\mathcal X}^{(T)} = o(T)$.

One of the most common types of strategy space over which regret minimization is studied is a probability simplex.
We denote a probability simplex of dimension $m$ by $\Delta^m$.
However, a probability simplex is not rich enough to represent the strategy space of a player in EFGs, which is a sequence-form polytope.
\citet{farinaetal2019} demonstrated how a regret minimizer operating over a sequence-form polytope can be constructed using those operating over probability simplices by applying Cartesian product and convex hull operations; every regret minimizer operating over a probability simplex is placed at some decision point $j$ to mix the available actions $\mathcal A_j$.

When solving an EFG by minimizing regret on its TFSDPs, the average strategy profile converges to a \textit{coarse-correlated equilibrium}, which coincides with NE in 2p0s games. There are many variants of such algorithms. 
When \textit{regret matching (RM)}~\cite{hartandmascolell2000} is used locally, the construction reduces to \textit{counterfactual regret minimization (CFR)}~\cite{zinkevichetal2007}.
When RM\textsuperscript{+} is used the construction reduces to \textit{CFR\textsuperscript{+}}~\cite{tammelin2014}.
Finally, when discounted RM is used, the construction reduces to \textit{Discounted (DCFR)}~\cite{brownandsandholm2019aaai}.
The type of local regret minimizer is not important in this paper.

\subsection{Watermarking Game-Playing Agents}

Watermarking in the modern AI literature has focused on embedding a hidden, robust signature within LLM-generated texts.
In this section, we describe its recent adaptation for perfect-information game-playing agents by~\citet{kimetal}, henceforth called the KFS watermark.

The KFS watermark is designed to act as a wrapper to a given strategy profile $\sigma$ and accepts the following parameters: \textcolor{Green}{\textbf{green}}-list size $\lambda \in (0, 1)$ and hardness $\delta \ge 0$.
Each time a query is made for history $h$ (\textit{e.g.}, for action recommendations), the KFS watermark queries the underlying model of $\sigma$ to calculate (or estimate) the expected utilities over $A(h)$.
Next, the hash of the observation made at $h$ is used to seed a pseudo-random number generator, which is then used to randomly partition $A(h)$ into \textcolor{Red}{\textbf{red}} and \textcolor{Green}{\textbf{green}} lists (their sizes are informed by $\lambda$).
Then, the expected utilities of taking different actions are adjusted to favor the selection of actions in the \textcolor{Green}{\textbf{green}} list, as opposed to those in the \textcolor{Red}{\textbf{red}} list (the adjustments are informed by $\delta$).
Finally, the watermark returns the action associated with the maximum (estimated) expected utility.

The design of boosting the expected utilities associated with \textcolor{Green}{\textbf{green}} actions was inspired by how the KGW watermark~\cite{kirchenbaueretal} boosts the logits associated with \textcolor{Green}{\textbf{green}} words.
Also, the design of using observations to seed a pseudo-random number generator was inspired by how the KGW watermark uses prompts' final tokens to do so.
\citet{kimetal} analyzed the concentration and deterministic bounds on the loss in expected utility incurred by the watermark and demonstrated that the watermark does not meaningfully impact the strength of game-playing agents by experimenting with off-the-shelf chess engines.

\subsection{Shortcomings in the Literature}

There are two significant shortcomings in the literature on watermarking game-playing agents.
First, the KFS watermark is designed for perfect-information games.
However, most real-world interactions are of imperfect information, and it is not obvious whether the KFS approach can be adapted to this setting.
In perfect-information games, there is a clear way to obtain optimal action probabilities from boosted expected utilities (taking $\argmax$), but in imperfect-information games, translating modified expected utilities into optimal action probabilities is highly non-trivial. 
While it may be possible to come up with an ad hoc solution like modifying the action probabilities directly, it would be preferable to come up with a well-motivated solution with clear connections to a learning framework, as we do in this paper.

Second, the KFS analysis assumes that the opponent strategy stays fixed as we watermark our strategy. In other words, they assume that the opponent does not change its strategy to exploit the weaknesses in our strategy caused by our watermarking. We fix this issue by analyzing the quality of our watermarked strategy by evaluating how exploitable it is.

\section{Simultaneous Watermarking \\ and Game Solving}

We are now ready to present our contribution on \textit{simultaneous watermarking and game solving}.
The core goal of our watermark is similar to the KFS watermark for perfect-information games: we desire the gameplay of watermarked agents to contain a robust, hidden signature that can be detected.
The KFS watermark labels each action as either \textcolor{Red}{\textbf{red}} or \textcolor{Green}{\textbf{green}} and encourages the strategy profile to play more \textcolor{Green}{\textbf{green}} actions than would be expected by chance.
However, unlike the KFS watermark, which boosts the expected utilities of applying different actions, our watermark is learned through what we coin \textit{perturbed regret minimization}, where tiny label-dependent perturbations are applied to utilities prior to observation, thereby encouraging \textcolor{Green}{\textbf{green}} actions to be played with increased frequency.

Our perturbed regret minimization framework accepts two parameters: \textcolor{Green}{\textbf{green}}-list size $\gamma \in (0, 1)$ and hardness $\delta \ge 0$.
During initialization, the watermarked learning framework partitions the set of sequences $\Sigma$ into a \textcolor{Red}{\textbf{red}} list $R$ of size $(1 - \gamma)|\Sigma|$ and a \textcolor{Green}{\textbf{green}} list $G$ of size $\gamma|\Sigma|$.
When $\Sigma$ is too large even to be enumerated, sequences can be classified on-the-fly using a hash function (this is asymptotically equivalent).
Partitioning the set of sequences is similar in functionality to partitioning the set of actions at each decision point, except now a decision point can have more than $\gamma$ fraction of \textcolor{Green}{\textbf{green}} actions.
Our framework also modifies the regret minimization algorithm used to solve a given EFG, \textit{e.g.}, CFR.
Recall that a regret minimizer operating over a sequence-form polytope (\textit{i.e.}, a player's strategy space) is really a composition of local regret minimizers, each of which operates over a probability simplex and mixes the available actions at a decision point.
In our framework, we replace every local regret minimizer with its perturbed counterpart.

Let $\mathfrak R_j$ be a local regret minimizer at some decision point $j$.
The perturbed counterpart of $\mathfrak R_j$ is almost identical to the original, except that, at every time $t$, it observes a utility vector $\vec u_j^{(t)}$ with some perturbation $\vec w_j$, \textit{i.e.}, $\vec u_j^{(t)} + \vec w_j$, where
\[
	\vec w_j = \left(\begin{cases} \delta & (j, a) \in G \\ -\delta\gamma/(1 - \gamma) & (j, a) \in R \end{cases}\right)_{a \in A(j)}.
\]
Here, the perturbations are controlled by $\delta$ and $\gamma$ and are designed so as to encourage the learning algorithm to play actions of \textcolor{Green}{\textbf{green}} sequences more often than by chance.
We also added offsets for \textcolor{Red}{\textbf{red}} sequences to ensure that the total adjustment across all sequences is zero.

\subsection{Regret Minimizer under Perturbation (P/RM)}

In this subsection, we formally define a \textit{regret minimizer under perturbation (P/RM)}, which simply wraps a regret minimizer operating over a probability simplex and hijacks the utility to add certain parameterized noise.
Its regret circuit diagram is given in Figure~\ref{fig:perturbed-regret-minimizer}, and its pseudocode in Algorithm~\ref{alg:perturbed-regret-minimizer}.
The P/RM accepts two parameters: a component regret minimizer $\mathfrak R$ and a static perturbation $\vec w$.
In our context, the component regret minimizer is the regret minimizer in CFR responsible for mixing over actions at some decision point.
At time $t \in \mathbb N^+$, the P/RM forwards the output of the component regret minimizer as its own.
Then, the P/RM observes a utility vector $\vec u^{(t)}$ and adds a perturbation $\vec w$ to it.
Finally, the perturbed utility $\vec u^{(t)} + \vec w$ is forwarded to the component regret minimizer for its observation.

\begin{figure}[!h]
	\centering
	\begin{tikzpicture}
		\node[draw, dashed, minimum width=17em, minimum height=5em, very thick] (boundary) at (0em, 0em) {};
		\node (utility) at (-11em, 0) {$\vec u^{(t - 1)}$};
		\node (delta) at (-7em, 5em) {$\vec w$};
		\node[diamond, draw, minimum size=1em, thick, inner sep=0em] (gate) at (-7em, 0em) {};
		\node[draw, minimum width=4em, minimum height=2em, thick] (regret minimizer) at (4em, 0em) {$\mathfrak R$};
		\node (decision) at (11em, 0em) {$\vec x^{(t)}$};

		\draw (delta) edge[->, color=Green, ultra thick] (gate);
		\begin{scope}
			\clip (-7em, 5em) -- (-7em, 0em) -- (-5em, 5em) -- (-5em, 0em) -- cycle;
			\draw (delta) edge[->, color=Red, ultra thick] (gate);
		\end{scope}
		\draw (utility) edge[->, color=orange, very thick] (gate);
		\draw (gate) edge[->, color=orange, very thick] node[above, color=black] {$\vec u^{(t - 1)} + \vec w$} (regret minimizer);
		\draw (regret minimizer) edge[color=blue, very thick] (7.5em, 0em);
		\draw (regret minimizer) edge[->, color=blue, very thick] (decision);
	\end{tikzpicture}
	\caption{
		Regret circuit diagram of a P/RM with $\mathfrak R$ as its component and static perturbation $\vec w$.
	}
	\label{fig:perturbed-regret-minimizer}
\end{figure}

\begin{algorithm}[!h]
	\caption{Regret minimizer under perturbation (P/RM)}
	\label{alg:perturbed-regret-minimizer}
	\textbf{Input:} A component regret minimizer $\mathfrak R$ operating over $\Delta^m$ and a static perturbation $\vec w$.
	\BlankLine
	\textsc{NextStrategy}$()$: \\
	\Indp{
		\Return $\mathfrak R$.\textsc{NextStrategy}$()$\;
	}
	\Indm
	\BlankLine
	\textsc{ObserveUtility}$\left(\vec u^{(t)} \in \mathbb R^m\right)$: \\
	\Indp{
		$\mathfrak R$.\textsc{ObserveUtility}$(\vec u^{(t)} + \vec w)$\;
	}
	\Indm
\end{algorithm}

\subsection{Analysis}

We analyze P/RM in this section.
Due to limited space, proofs are relegated to Appendix~\ref{sec:omitted-proofs}.
In the following proposition, we show that a P/RM incurs only constant average regret.

\begin{proposition}
	Let $\mathfrak W$ be a P/RM with a component $\mathfrak R$ operating over $\Delta^m$ and a static perturbation in $[L, U]^m$.
	Further, let $W^{(T)}$ and $R^{(T)}$ be the regret incurred by $\mathfrak W$ and $\mathfrak R$, respectively, until time $T$.
	Then,
	\[
		W^{(T)} \le R^{(T)} + T(U - L).
	\]
	\label{ppn:probability-simplex}
\end{proposition}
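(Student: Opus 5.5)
The proof compares the regret of $\mathfrak W$, measured against the true utilities $\vec u^{(t)}$, with the regret of $\mathfrak R$, measured against the perturbed utilities $\vec u^{(t)} + \vec w$. The two minimizers output the same strategies $\vec x^{(t)} \in \Delta^m$, so the only difference is a linear correction term involving $\vec w$.

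First, fix any comparator $\vec x \in \Delta^m$. The regret of $\mathfrak W$ against $\vec x$ expands as
\[
\sum_{t=1}^T \langle \vec u^{(t)}, \vec x - \vec x^{(t)}\rangle
= \sum_{t=1}^T \langle \vec u^{(t)} + \vec w, \vec x - \vec x^{(t)}\rangle - \sum_{t=1}^T \langle \vec w, \vec x - \vec x^{(t)}\rangle .
\]

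Next, bound the two sums separately.
\begin{itemize}
\item The first sum is at most $R^{(T)}$. This holds because $R^{(T)}$ is, by definition, the maximum of this quantity over $\vec x \in \Delta^m$, since $\mathfrak R$ observed the perturbed utilities.
\item For the second term, each summand is controlled because both $\vec x$ and $\vec x^{(t)}$ lie in the simplex. Hence $\langle \vec w, \vec x\rangle$ and $\langle \vec w, \vec x^{(t)}\rangle$ are convex combinations of the entries of $\vec w$, and both lie in $[L, U]$. It follows that $|\langle \vec w, \vec x - \vec x^{(t)}\rangle| \le U - L$, so $-\sum_t \langle \vec w, \vec x - \vec x^{(t)}\rangle \le T(U-L)$.
\end{itemize}

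Finally, take the maximum over $\vec x \in \Delta^m$ on the left-hand side. The right-hand bound $R^{(T)} + T(U-L)$ does not depend on $\vec x$, which gives $W^{(T)} \le R^{(T)} + T(U - L)$.

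The argument has no real obstacle. The one step needing care is the use of the simplex structure: it is what turns an entrywise range $[L,U]$ for $\vec w$ into a bound on inner-product differences. Without it, one would have to pay a factor depending on norms of $\vec x - \vec x^{(t)}$, such as an $\ell_1$ diameter of $2$ times $\|\vec w\|_\infty$. The convex-combination view gives the tighter constant $U-L$ directly.
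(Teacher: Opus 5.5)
Your proof is correct and takes essentially the same approach as the paper. Both rewrite the true utilities as $(\vec u^{(t)}+\vec w)-\vec w$, bound the perturbed part by $R^{(T)}$, and use that $\langle \vec w,\cdot\rangle$ lies in $[L,U]$ on the simplex to bound the correction by $T(U-L)$. The only difference is cosmetic: you fix the comparator before maximizing, whereas the paper splits the maximum via $\max(f+g)\le \max f+\max g$.
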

\begin{remark}
	Assuming $\mathfrak R$ is a regret minimizer, $R^{(T)} = o(T)$, so the first term is not important.
	Technically, a P/RM is not a regret minimizer, as it incurs linear regret.
	However, the average regret can be bounded as small as desired by setting $L$ and $U$ accordingly.
	In our context, $L = -\delta\gamma/(1 - \gamma)$ and $U = \delta$.
	Thus, the average regret can be managed by setting $\delta$ accordingly, but it may impact the number of playthroughs needed to detect the watermark, as we will show later.
\end{remark}

Now, we use Proposition~\ref{ppn:probability-simplex} to analyze the regret incurred by perturbed regret minimization when operating on a TFSDP.
\begin{proposition}
	Let $\mathfrak W$ be the construction in perturbed regret minimization operating on a TFSDP with decision points $\mathcal J$.
	In every $j \in \mathcal J$, actions at $j$ are mixed by a P/RM with a component $\mathfrak R_j$ operating over $\Delta^{\mathcal A_j}$ and a static perturbation in $[L, U]^{\mathcal A_j}$.
	Further, let $W^{(T)}$ and $R_j^{(T)}$ be the regret incurred by $\mathfrak W$ and $\mathfrak R_j$, respectively, until time $T$.
	Then,
	\[
		W^{(T)} \le \sum_{j \in \mathcal J} \max\{0, R_j^{(T)}\} + T|\mathcal J|(U - L).
	\]
	\label{ppn:sequence-form-polytope}
\end{proposition}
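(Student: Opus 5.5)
The plan is to combine Proposition~\ref{ppn:probability-simplex} with the standard laminar regret decomposition for sequence-form polytopes of \citet{farinaetal2019}, which is the basis of the CFR bound. The construction $\mathfrak W$ is exactly the CFR-style composition of local regret minimizers, each local one replaced by a P/RM $\mathfrak W_j$ wrapping $\mathfrak R_j$. The known decomposition then gives
\[
	W^{(T)} \le \sum_{j \in \mathcal J} \max\{0, W_j^{(T)}\},
\]
where $W_j^{(T)}$ is the regret of $\mathfrak W_j$ measured against the counterfactual utilities $\vec u_j^{(t)}$ that $\mathfrak W_j$ receives as input (before perturbation). One could even sharpen this by weighting each term by the reach of $j$ under the comparator, but the unweighted sum suffices here.

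I will establish the decomposition by induction from the leaves of the TFSDP upward. At an observation point, the subtree regret is at most the sum of the children's subtree regrets; this is the Cartesian product step. At a decision point $j$, the regret of the subtree equals the local regret of $\mathfrak W_j$ on the counterfactual utilities plus the best-response-weighted regret of the child subtrees; this is the convex hull step. Bounding each child's contribution by its positive part lets us drop the weights, which are at most $1$. This is why the $\max\{0,\cdot\}$ appears in the statement.

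The one point I need to check is that the local decision at $j$ is evaluated with the unperturbed counterfactual utilities. This holds because the perturbation is applied only inside the P/RM. Whatever strategy $\mathfrak W_j$ outputs, it is the strategy used in the composite, so the decomposition's accounting refers to the true utilities $\vec u_j^{(t)}$. The decomposition is therefore unaffected by the internal perturbation.

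Next I apply Proposition~\ref{ppn:probability-simplex} to each $j$, which gives $W_j^{(T)} \le R_j^{(T)} + T(U-L)$. Since $T(U-L)\ge 0$, we have $\max\{0, R_j^{(T)} + T(U-L)\} \le \max\{0,R_j^{(T)}\} + T(U-L)$. Summing over $j\in\mathcal J$ yields the stated bound $\sum_j \max\{0,R_j^{(T)}\} + T|\mathcal J|(U-L)$.

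I expect the main obstacle to be stating the laminar decomposition cleanly in this paper's sequence-form notation. The difficulty is in defining the counterfactual utility $\vec u_j^{(t)}[a]$, which is the immediate utility $\vec u^{(t)}[(j,a)]$ plus the value, under the current strategy, of the subtree rooted at $\rho(j,a)$. It also requires handling the sign issues that force the positive parts. If a full rederivation is too long, I will cite the CFR regret decomposition theorem of \citet{farinaetal2019} (equivalently \citet{zinkevichetal2007}) for arbitrary local minimizers. Its proof never uses Hannan consistency of the local learners, so it applies verbatim to P/RMs.
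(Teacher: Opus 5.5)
Your proposal is correct and follows the paper's proof: the paper likewise invokes the known laminar decomposition $W^{(T)} \le \sum_{j \in \mathcal J} \max\{0, W_j^{(T)}\}$ (citing Farina's Proposition 4.1), applies Proposition~\ref{ppn:probability-simplex} at each decision point, and then uses $\max\{0, R_j^{(T)} + T(U-L)\} \le \max\{0, R_j^{(T)}\} + T(U-L)$. Your point that the decomposition is measured against the unperturbed counterfactual utilities, and so applies verbatim to P/RMs, is something the paper's proof leaves implicit.
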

\begin{remark}
	Again, in our context, the average regret can be managed by setting $\delta$ accordingly.
\end{remark}

Finally, we connect the propositions with the following theorem about the quality of the solution obtained via perturbed regret minimization.
\begin{theorem}
	Let $\Gamma$ be a 2p0s EFG with infosets $\mathcal I$.
	Suppose that perturbed regret minimization with \textcolor{Green}{\textbf{green}}-list size $\gamma \in (0, 1)$ and hardness $\delta \ge 0$ is run for $T$ iterations to solve $\Gamma$.
	Let $R_I^{(T)}$ be the regret incurred by the component of the P/RM at the decision point corresponding to infoset $I$ until time $T$.
	Then, the average strategy profile is a $2\epsilon$-NE where
	\[
		2\epsilon \le \sum_{I \in \mathcal I} \max\{0, R_I^{(T)}\}/T + \delta|\mathcal I|/(1 - \gamma).
	\]
	\label{thm:eps-ne}
\end{theorem}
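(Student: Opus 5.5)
We derive the theorem from Proposition~\ref{ppn:sequence-form-polytope} together with the standard folk theorem linking regret to equilibrium in 2p0s games. Each player $i \in \{1,2\}$ faces a TFSDP whose decision points are exactly the infosets $\mathcal I_i$. In iteration $t$ the player receives the counterfactual utility vector $\vec u_i^{(t)}$ induced by the opponent's current strategy, so that $u_i(\vec x_i, \vec x_{-i}^{(t)}) = \langle \vec u_i^{(t)}, \vec x_i\rangle$.

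\textbf{Step 1: instantiate Proposition~\ref{ppn:sequence-form-polytope}.} The perturbations satisfy $U = \delta$ and $L = -\delta\gamma/(1-\gamma)$. Hence
\[
	U - L = \delta\left(1 + \frac{\gamma}{1-\gamma}\right) = \frac{\delta}{1-\gamma}.
\]
For each player $i$, the regret $W_i^{(T)}$ measured against the unperturbed utilities therefore satisfies
\[
	W_i^{(T)} \le \sum_{I \in \mathcal I_i} \max\{0, R_I^{(T)}\} + T|\mathcal I_i|\,\delta/(1-\gamma).
\]
It matters that Proposition~\ref{ppn:sequence-form-polytope} bounds regret with respect to the true utilities $\vec u^{(t)}$ and not the perturbed ones. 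This holds because the perturbed construction observes $\vec u^{(t)}$ and only internally adds $\vec w_j$. We should also check that the perturbation applied in sequence-form terms is the local one at each decision point, i.e., that CFR's counterfactual decomposition passes $\vec w_j$ only to the local minimizer. This is how the construction is defined.

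\textbf{Step 2: folk theorem.} Let $\bar x_i = \frac1T\sum_t \vec x_i^{(t)}$. Bilinearity of $u_1$ in sequence form and the zero-sum property give
\[
	\max_{x_1} u_1(x_1, \bar x_2) - \min_{x_2} u_1(\bar x_1, x_2) \le \frac{W_1^{(T)} + W_2^{(T)}}{T}.
\]
To see this, add the two regret definitions; the terms $\sum_t u_1(\vec x^{(t)})$ cancel because $u_2 = -u_1$. The left-hand side equals $\sum_i \bigl(\max_{\sigma_i} u_i(\sigma_i, \bar\sigma_{-i}) - u_i(\bar\sigma)\bigr)$, since the $u_i(\bar\sigma)$ terms also cancel. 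That sum is $|P|\epsilon = 2\epsilon$ in the paper's definition of exploitability. Each individual deviation gain is bounded by this sum, so the profile is a $2\epsilon$-NE.

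\textbf{Step 3: combine.} Summing the bounds from Step 1 over both players uses $\mathcal I = \mathcal I_1 \cup \mathcal I_2$. Dividing by $T$ yields $2\epsilon \le \sum_{I\in\mathcal I}\max\{0,R_I^{(T)}\}/T + \delta|\mathcal I|/(1-\gamma)$, which is the claim.

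\textbf{Main obstacle.} The hardest part is the bookkeeping in Step 2 that matches the paper's normalization. The paper defines exploitability with a factor $1/|P|$ and claims $2\epsilon$-NE, so the constant must match exactly. A secondary check is that the Hannan regret of the perturbed minimizer, measured on true utilities, is exactly the quantity entering the folk theorem. It is, because the strategies output are those actually played.
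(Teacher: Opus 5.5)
Your proposal is correct and follows essentially the same route as the paper. You instantiate Proposition~\ref{ppn:sequence-form-polytope} for each player with $U - L = \delta/(1-\gamma)$, write $2\epsilon$ as the sum of the two players' average unperturbed regrets via bilinearity and the zero-sum cancellation, and sum over $\mathcal I = \mathcal I_1 \cup \mathcal I_2$; the only cosmetic difference is that you state the regret--exploitability relation as an inequality and explicitly check the $2\epsilon$-NE claim against the $1/|P|$ normalization, whereas the paper writes it as an equality using the payoff matrix $A$.
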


Suppose that we fix the type of local regret minimizers to be regret matching.
Then, our framework reduces to a `watermarked' counterpart of CFR.

\begin{corollary}
	Let $\Gamma$ be a 2p0s EFG with infosets $\mathcal I$.
	Further, let $\Delta$ be the range of payoffs, and let $m$ be the maximum number of actions in a player node.
	Suppose that a watermarked counterpart of CFR with \textcolor{Green}{\textbf{green}}-list size $\gamma \in (0, 1)$ and hardness $\delta \ge 0$ is run for $T$ iterations to solve $\Gamma$.
	Then, the average strategy profile is a $2\epsilon$-NE where
	\[
		2\epsilon \le |\mathcal I|(\Delta + \delta/(1 - \gamma))\sqrt m/\sqrt T + \delta|\mathcal I|/(1 - \gamma).
	\]
	\label{cll:cfr}
\end{corollary}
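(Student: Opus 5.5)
The plan is to specialize Theorem~\ref{thm:eps-ne} to the case where every local component $\mathfrak R_I$ is regret matching. Then the only remaining work is to bound each $\max\{0, R_I^{(T)}\}$ with the standard RM regret bound.

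Recall the standard RM guarantee. If RM operates over $\Delta^{m'}$ and observes utility vectors whose entries lie in an interval of width $\Delta'$, then after $T$ rounds its regret satisfies $R^{(T)} \le \Delta'\sqrt{m' T}$. This is the Hart--Mas-Colell bound, used in CFR as in \citet{zinkevichetal2007}.

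The key step is to identify the range $\Delta'$ of the utilities actually fed to $\mathfrak R_I$. These are the perturbed counterfactual utilities $\vec u_I^{(t)} + \vec w_I$, not the original ones.
\begin{itemize}
\item The counterfactual utilities at an infoset lie within a range of width at most $\Delta$, the payoff range. This is the usual CFR normalization: counterfactual values are reach-weighted expected payoffs with reach weights at most $1$. The same normalization gives the $\Delta|\mathcal I|\sqrt m/\sqrt T$ term in plain CFR.
\item The perturbation entries lie in $[L, U] = [-\delta\gamma/(1-\gamma), \delta]$. This interval has width $U - L = \delta + \delta\gamma/(1-\gamma) = \delta/(1-\gamma)$.
\item Adding the two, each component sees utilities in a range of width at most $\Delta + \delta/(1-\gamma)$.
\end{itemize}
With $|\mathcal A_I| \le m$, this gives $\max\{0, R_I^{(T)}\} \le (\Delta + \delta/(1-\gamma))\sqrt{mT}$ for every $I$.

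Summing over $I \in \mathcal I$, dividing by $T$, and substituting into Theorem~\ref{thm:eps-ne} yields
\[
2\epsilon \le |\mathcal I|(\Delta + \delta/(1-\gamma))\sqrt m/\sqrt T + \delta|\mathcal I|/(1-\gamma),
\]
which is the claimed bound.

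The main obstacle is the range bookkeeping. One has to be careful that the counterfactual utility vector at an infoset, which is weighted by opponent and chance reach probabilities, really has spread at most $\Delta$, so that the perturbation widens the range by exactly $U - L$ and no more. I would justify this by noting that reach weights are at most $1$, exactly as in the original CFR analysis, and that the perturbation is a constant offset per action independent of $t$.
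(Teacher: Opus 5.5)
Your proposal is correct and follows the paper's proof essentially step for step. Both arguments invoke Theorem~\ref{thm:eps-ne}, apply the regret-matching bound $R_I^{(T)} \le \Delta'\sqrt{|A(I)|}\sqrt{T}$ with observed-utility range $\Delta' \le \Delta + (U-L) = \Delta + \delta/(1-\gamma)$, and then use $|A(I)| \le m$. Your justification that counterfactual utilities at an infoset have spread at most $\Delta$, because reach weights are at most one, fills in a step the paper only asserts.
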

\begin{remark}
	While it can be concerning that watermarking incurs some increase in exploitability, we later demonstrate that this is not much of a concern in practice.
	Besides, typical runs of the CFR family of algorithms (\textit{e.g.}, 1,000 iterations) yield solutions with some exploitability anyway.
	By managing the watermark parameters, the exploitability cost can be made unsubstantial in this context.
\end{remark}

\subsection{Detecting the Watermark}

Our watermark can be detected by calculating a z-score like its predecessors~\cite{kirchenbaueretal,zhao,takezawaetal,huoetal,wangetal,kimetal}.
As usual, the null hypothesis is that the player in question acts without any knowledge of the \textcolor{Green}{\textbf{green}} and \textcolor{Red}{\textbf{red}} lists.
Let $n$ be the number of actions applied by the player, and let $n_G$ be the number of \textcolor{Green}{\textbf{green}} actions applied by the player.
Then, we have the following z-score:
\[
	z = \frac{n_G - \gamma n}{\sqrt{n \gamma (1 - \gamma)}},
\]
which is a measure of how many more \textcolor{Green}{\textbf{green}} actions are applied than expected by chance.
\citet{kimetal}~suggested a threshold of $z = 4$, corresponding to a false positive rate of approximately $3 \times 10^{-5}$.

\section{Experiments}

\begin{figure*}[t]
	\centering
	\includegraphics[width=\textwidth]{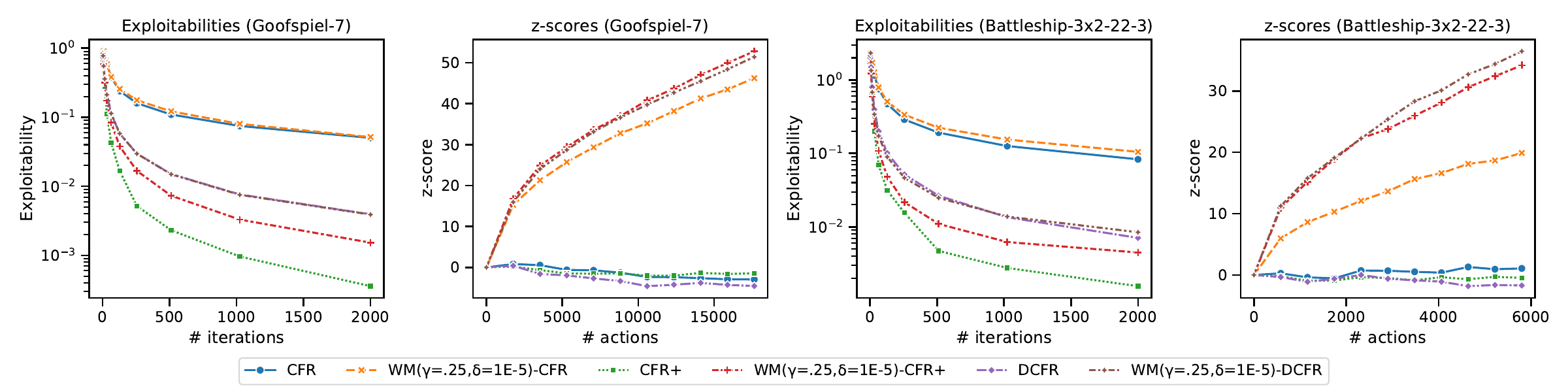}
	\caption{
		Exploitability and z-score plots from our experiments running the (watermarked) CFR family of algorithms.
	}
	\label{fig:experiments}
\end{figure*}

\begin{table*}[t]
	\centering
	\begin{tabular}{l||ccccc|ccccc}
	\toprule
		\multirow{2}{*}{Variant} & \multicolumn{5}{c|}{Goofspiel-7} & \multicolumn{5}{c}{Battleship-3x2-22-3} \\
		& $\epsilon_{t=2000}$ & $n_{z=4}$ & $z_{n=10}$ & $z_{n=100}$ & $z_{n=1000}$ & $\epsilon_{t=2000}$ & $n_{z=4}$ & $z_{n=10}$ & $z_{n=100}$ & $z_{n=1000}$ \\
		\midrule
		CFR & 5.03$\times$10\textsuperscript{$-$2} & N/A & $-$0.68 & $-$1.2 & 0.12 & 8.30$\times$10\textsuperscript{$-$2} & N/A & 0.35 & 0.51 & $-$0.18 \\
		\textbf{WM}-CFR & 5.16$\times$10\textsuperscript{$-$2} & 55 & 2.0 & 4.6 & 12 & 1.05$\times$10\textsuperscript{$-$1} & 277 & $-$0.12 & 1.5 & 8.1 \\
		CFR+ & 3.58$\times$10\textsuperscript{$-$4} & N/A & 1.3 & $-$1.2 & $-$0.039 & 1.56$\times$10\textsuperscript{$-$3} & N/A & 0.81 & $-$0.073 & $-$1.1 \\
		{*}\textbf{WM}-CFR+ & 1.53$\times$10\textsuperscript{$-$3} & 86 & 0.88 & 4.5 & 13 & 4.47$\times$10\textsuperscript{$-$3} & 92 & $-$0.12 & 4.0 & 14 \\
		DCFR & 3.93$\times$10\textsuperscript{$-$3} & N/A & 0.098 & $-$0.43 & 0.90 & 7.07$\times$10\textsuperscript{$-$3} & N/A & 0.35 & 0.80 & $-$1.7 \\
		\textbf{WM}-DCFR & 3.90$\times$10\textsuperscript{$-$3} & 112 & 0.098 & 3.4 & 12 & 8.45$\times$10\textsuperscript{$-$3} & 63 & 0.35 & 4.6 & 15 \\
		\bottomrule
	\end{tabular}
	\caption{
		Tabulated results from our experiments running (watermarked) CFR variants on select games.
		WM-$\mathfrak R$ corresponds to the watermarked counterpart of the regret minimizer $\mathfrak R$ with parameters $\gamma = 0.25$ and $\delta = 10^{-5}$.
		N/A stands for `not applicable'.
		$\epsilon_{t=2000}$ denotes the exploitability of the computed strategy profile after 2,000 iterations, $n_{z=4}$ denotes the number of actions taken for the z-score to cross the $z = 4$ threshold, and finally $z_{n=(\cdot)}$ denotes the z-score after $(\cdot)$ actions have been played.
		The row marked with `*' contains the same data shown in the identically marked rows in Table~\ref{tab:ablations}.
	}
	\label{tab:experiments}
\end{table*}

In our experiments, we tested our watermark on two very different games with tens of millions of nodes each: Goofspiel-7 and Battleship-3x2-22-3. 
The games are described in Appendix~\ref{sec:game-descriptions}.
We considered three foundational learning algorithms from the CFR family: CFR~\cite{zinkevichetal2007}, CFR\textsuperscript{+}~\cite{tammelin2014}, and DCFR~\cite{brownandsandholm2019aaai}.
We also ran their watermarked counterparts with parameters $\gamma = 0.25$ and $\delta = 10^{-5}$, and we further scaled $\delta$ by $\max_{i,j} |A_{i,j}|$, where $A$ is the utility matrix from the perspective of the row player.
We found these to be good choices later when we ablated on the parameter values.

Each (watermarked) CFR variant was run for 2,000 iterations, during which we recorded the average strategy profile's exact exploitability by computing best responses. 
We also ran Monte Carlo game simulations where we pit the resulting watermarked agents against their original counterparts.
To ensure the diversity of play, we initiated the gameplays from every possible early-game situation (first five player moves) and player position permutations.
Our practice is consistent with how chess engines are evaluated~\cite{kimetal}.
We tracked the z-scores across the simulations.

The exploitabilities and z-scores are plotted in Figure~\ref{fig:experiments} and tabulated in Table~\ref{tab:experiments}. The expected utilities of the watermarked solutions against their non-watermarked counterparts are tabulated in Appendix~\ref{sec:omitted-data}.
For CFR and DCFR, the presence of the watermark barely affects the convergence rate, although it does affect CFR\textsuperscript{+} to a small degree.
It is interesting how the different CFR variants react differently to perturbations, and it seems that the way CFR\textsuperscript{+} uniquely behaves (\textit{e.g.}, aggressively flooring negative regrets) may be responsible for the slowdown in convergence.
%That said, this can also be interpreted as a positive sign that the learning algorithm is `embedding' the desired watermark.

As expected, the z-scores of the watermarked agents rise as the number of simulations increases, whereas those of the non-watermarked agents hover near zero as desired. 
We observed that it takes around 100 actions for the z-score corresponding to the advanced CFR variants (CFR\textsuperscript{+} and DCFR) to cross the suggested $z=4$ threshold.
Assuming it takes around 30 seconds (\textit{e.g.}, for a human) to act, this means a watermarked agent would be detected after a couple of hours of gameplay.
We also observed that watermarking does not affect the output solution's expected utility against its non-watermarked counterpart.
Appendix~\ref{sec:roc} contains ROC curves per 100 actions, whose AUCs are above 0.9 for the advanced CFR variants, indicating strong discrimination.

\section{Ablation Studies}

We ablated on the watermark parameters to see how they affect the behavior of the watermarked learning algorithms.
First, we fixed $\delta = 10^{-5}$, iterated through the values of $\gamma \in (0.1, 0.25, 0.5, 0.75, 0.9)$.
Next, we fixed $\gamma = 0.25$, iterated through the values of $\delta \in (0.1, 0.01, 10^{-3}, 10^{-5}, 10^{-8})$.
In our ablation studies, we only considered CFR\textsuperscript{+}.
We also recorded exploitabilities and ran Monte Carlo simulations as we did in the main experiments in the previous section.

The exploitabilities and z-scores are plotted in Figure~\ref{fig:ablations} and tabulated in Table~\ref{tab:ablations}. The expected utilities of the watermarked solutions against the solution from CFR\textsuperscript{+} are tabulated in Appendix~\ref{sec:omitted-data}.
Interestingly, increasing $\gamma$ helped CFR\textsuperscript{+} converge faster, perhaps because this made the perturbations more uniform.
Also, middling gamma values helped the watermarked agent be more detectable.
As expected, decreasing $\delta$ helped CFR\textsuperscript{+} converge faster, as it decreased the magnitudes of the applied perturbations.
That said, increasing $\delta$ helped the watermarked agent be more detectable.
We also observed that, for varying $\gamma$, watermarking does not affect the output solution's head-to-head expected utility against that of CFR\textsuperscript{+}; however, larger values of $\delta$ do so negatively.

\section{Discussion}

We now discuss some potential limitations and possibilities.

\subsection{Attacking the Watermark}

\begin{figure*}[t!]
	\centering
	\includegraphics[width=\textwidth]{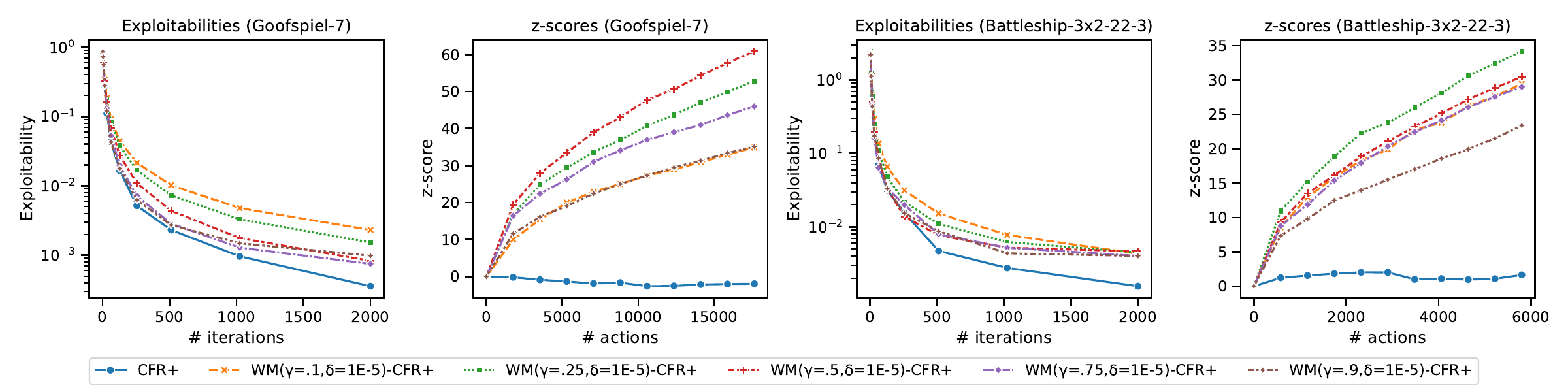}
	\includegraphics[width=\textwidth]{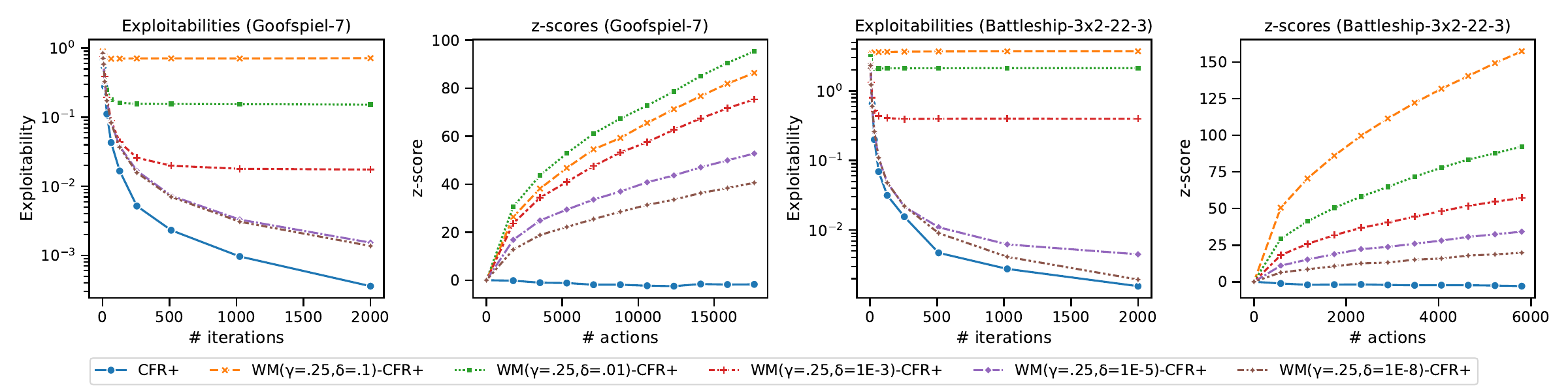}
	\caption{
		Exploitability and z-score plots from our ablation studies running (watermarked) CFR\textsuperscript{+}.
	}
	\label{fig:ablations}
\end{figure*}

\begin{table*}[t!]
	\centering
	\begin{tabular}{c|c||ccccc|ccccc}
	\toprule
		\multirow{2}{*}{$\gamma$} & \multirow{2}{*}{$\delta$} & \multicolumn{5}{c|}{Goofspiel-7} & \multicolumn{5}{c}{Battleship-3x2-22-3} \\
		& & $\epsilon_{t=2000}$ & $n_{z=4}$ & $z_{n=10}$ & $z_{n=100}$ & $z_{n=1000}$ & $\epsilon_{t=2000}$ & $n_{z=4}$ & $z_{n=10}$ & $z_{n=100}$ & $z_{n=1000}$ \\
		\midrule
		0.1 & \multirow{5}{*}{10\textsuperscript{$-$5}} & 2.32$\times$10\textsuperscript{$-$3} & 163 & $-$0.85 & 3.4 & 8.2 & 4.30$\times$10\textsuperscript{$-$3} & 155 & $-$0.33 & 1.7 & 12 \\
		{*}0.25 & & 1.53$\times$10\textsuperscript{$-$3} & 86 & 0.88 & 4.5 & 13 & 4.47$\times$10\textsuperscript{$-$3} & 92 & $-$0.12 & 4.0 & 14 \\
		0.5 & & 8.30$\times$10\textsuperscript{$-$4} & 101 & 0.85 & 3.8 & 15 & 4.64$\times$10\textsuperscript{$-$3} & 65 & 2.2 & 4.0 & 12 \\
		0.75 & & 7.51$\times$10\textsuperscript{$-$4} & 121 & 0.29 & 3.1 & 13 & 4.01$\times$10\textsuperscript{$-$3} & 83 & 1.5 & 4.3 & 11 \\
		0.9 & & 9.85$\times$10\textsuperscript{$-$4} & 203 & $-$0.28 & 1.6 & 8.5 & 4.02$\times$10\textsuperscript{$-$3} & 137 & 1.0 & 3.6 & 8.7 \\
		\midrule
		\multirow{5}{*}{0.25} & 0.1 & 7.16$\times$10\textsuperscript{$-$1} & 29 & 1.7 & 6.4 & 20 & 3.72 & 3 & 5.0 & 21 & 66 \\
		& 0.01 & 1.52$\times$10\textsuperscript{$-$1} & 15 & 2.4 & 7.6 & 24 & 2.13 & 15 & 2.7 & 11 & 38 \\
		& 10\textsuperscript{$-$3} & 1.75$\times$10\textsuperscript{$-$2} & 70 & 0.49 & 6.1 & 20 & 3.97$\times$10\textsuperscript{$-$1} & 16 & 2.7 & 6.6 & 24 \\
		& {*}10\textsuperscript{$-$5} & 1.53$\times$10\textsuperscript{$-$3} & 86 & 0.88 & 4.5 & 13 & 4.47$\times$10\textsuperscript{$-$3} & 92 & $-$0.12 & 4.0 & 14 \\
		& 10\textsuperscript{$-$8} & 1.37$\times$10\textsuperscript{$-$3} & 112 & 0.88 & 3.4 & 11 & 1.94$\times$10\textsuperscript{$-$3} & 136 & $-$0.12 & 2.6 & 7.4 \\
		\bottomrule
	\end{tabular}
	\caption{
		Tabulated results from our ablation studies running watermarked CFR+ on select games.
		N/A stands for `not applicable'.
		$\epsilon_{t=2000}$ denotes the exploitability of the computed strategy profile after 2,000 iterations, $n_{z=4}$ denotes the number of actions taken for the z-score to cross the $z = 4$ threshold, and finally $z_{n=(\cdot)}$ denotes the z-score after $(\cdot)$ actions have been played.
		The rows marked with `*' contain the same data shown in the identically marked row in Table~\ref{tab:experiments}.
	}
	\label{tab:ablations}
\end{table*}

We begin by discussing possible attacks on our watermark. As observed by~\citet{kimetal}, there is no point in modifying the gameplay post-hoc, as utilities have already been realized and other players can also keep records.
Thus, the unique nature of game playing prevents attackers from carrying out `offline' attacks for LLM watermarks involving paraphrasing, insertion, and replacement.
As usual, tricks such as only using the watermarked agent during crucial situations will not suffice in bypassing the watermark, as the z-score will eventually cross the threshold.
However, it is unknown whether the attacker could somehow use peripheral information---such as the expected utilities---to bypass the watermark. If this turns out to be possible, it would imply a major breakthrough in equilibrium finding.

A notable vulnerability of the KFS watermark is less of an issue in imperfect-information games. 
In the KFS watermark, an attacker who is trying to circumvent the watermark can, instead of querying the strategy at the current history $h$, query another history $h'$ whose observation is different from that of $h$ but which is strategically similar.
The attacker would successfully bypass the KFS watermark if the recommended action for $h'$ is also good for $h$.
While, in theory, this attack can be applied to our watermark (\textit{e.g.}, by passing a similar infoset), this is less of a concern in our setting.
Solving most games in real-life requires using game abstraction~\cite{sandholm}, which shrinks the game size by merging strategically similar infosets.
This effectively nullifies the attack since using a good abstraction prevents the attacker from finding a strategically similar infoset to manipulate.
So, interestingly, abstraction techniques help not only to solve larger games more quickly but also to strengthen our watermark!

\subsection{Lossless Watermarking?}

We continue our discussions by raising the possibility of lossless watermarking, that is, ensuring that perturbed regret minimization incurs sublinear regret.
One way to do so is to ensure that perturbations decay exponentially during game solving (see Proposition~\ref{ppn:lossless-probability-simplex}).
However, it is unknown whether a `lossless' P/RM truly `selects' an equilibrium which plays \textcolor{Green}{\textbf{green}} actions more frequently than others.
This connects to the open question raised by~\citet{brownandsandholm2017} on which particular equilibrium CFR converges to.\footnote{See the footnote in~\citet[page 290]{brownandsandholm2017}.}
Additionally, a game's equilibrium space may not even be rich enough to contain a lossless watermarked strategy that can be efficiently detected. As an extreme case, consider a game that has just one equilibrium. 
More details are given in Appendix~\ref{sec:lossless-perturbed-regret-minimizer}.

\section{Conclusions and Future Research}

We developed the first watermarking technique for game-playing agents that also applies to imperfect-information games. Our approach does  watermarking and game solving simultaneously via perturbed regret minimization, which applies perturbations to the observed utilities so as to encourage the learning algorithm to embed the watermark. 
While our watermark incurs some cost in exploitability, we analyzed and bounded this loss.
Empirically, we showed that the costs in exploitability, convergence rate, and expected utility are small to none.
In our ablation studies, we swept through different values of watermark parameters and noted how the learning process differs.
Then, we discussed the possible attacks to our watermark and the potential ways our watermark can be improved.
A promising avenue of future work is expanding our approach to achieve theoretically lossless watermarking in games where that may be possible, and our proposal in the discussion may be a good starting point.

\bibliography{aaai2027}

% Check whether the conference requires a reproducibility checklist to be included in the paper.
% If so, you can uncomment the following line and adjust the path to include it.
% \input{ReproducibilityChecklist.tex}

\clearpage
\appendix

\setcounter{secnumdepth}{2}

\section{Omitted Proofs}
\label{sec:omitted-proofs}

This section contains the proofs omitted from the main body of the paper.

\subsection{Proof of Proposition~\ref{ppn:probability-simplex}}

\begin{proof}
	Let $\vec w$ be the static perturbation.
	Then, we have that
	\begin{align*}
		W^{(T)}
			& = \max_{\vec x \in \Delta^m} \sum_{t = 1}^T \langle\vec u^{(t)}, \vec x\rangle - \sum_{t = 1}^T \langle\vec u^{(t)}, \vec x^{(t)}\rangle \\
			& = \max_{\vec x \in \Delta^m} \sum_{t = 1}^T \langle\vec u^{(t)} + \vec w - \vec w, \vec x\rangle \\
			& \qquad - \sum_{t = 1}^T \langle\vec u^{(t)} + \vec w - \vec w, \vec x^{(t)}\rangle \\
			& \le \max_{\vec x \in \Delta^m} \sum_{t = 1}^T \langle\vec u^{(t)} + \vec w, \vec x\rangle + \max_{\vec x \in \Delta^m} \sum_{t = 1}^T \langle-\vec w, \vec x\rangle \\
			& \qquad - \sum_{t = 1}^T \langle\vec u^{(t)} + \vec w, \vec x^{(t)}\rangle - \sum_{t = 1}^T \langle-\vec w, \vec x^{(t)}\rangle \\
			& = R^{(T)} + \max_{\vec x \in \Delta^m} \sum_{t = 1}^T \langle-\vec w, \vec x\rangle - \sum_{t = 1}^T \langle-\vec w, \vec x^{(t)}\rangle \\
			& \le R^{(T)} - \sum_{t = 1}^T \min_{\vec x \in \Delta^m} \langle\vec w, \vec x\rangle + \sum_{t = 1}^T \langle\vec w, \vec x^{(t)}\rangle \\
			& \le R^{(T)} - TL + TU \\
			& = R^{(T)} + T(U - L),
	\end{align*}
	as required.
\end{proof}

\subsection{Proof of Proposition~\ref{ppn:sequence-form-polytope}}

\begin{proof}
	Let $\mathfrak W_j$ be the local P/RM in the decision point $j$, and let $W_j^{(T)}$ be the regret incurred by $\mathfrak W_j$ until time $T$.
	\citet[Proposition 4.1]{farina} showed that
	\[
		W^{(T)} \le \sum_{j \in \mathcal J} \max\{0, W_j^{(T)}\}.
	\]
	By Proposition~\ref{ppn:probability-simplex}, we have that
	\begin{align*}
		W^{(T)}
			& \le \sum_{j \in \mathcal J} \max\{0, R_j^{(T)} + T(U - L)\} \\
			& \le \sum_{j \in \mathcal J} \max\{0, R_j^{(T)}\} + T|\mathcal J|(U - L),
	\end{align*}
	as required.
\end{proof}

\subsection{Proof of Theorem~\ref{thm:eps-ne}}

\begin{proof}
	Let $\mathcal X$ and $\mathcal Y$ be the sequence-form polytopes of the row and column players, respectively; let $A$ be the utility matrix from the perspective of the row player, let $(\bar x, \bar y)$ be the average sequence-form strategy profile, and let $\epsilon$ be its exploitability.
	Further, let $(\vec x^{(t)}, \vec y^{(t)})$ be the output strategy profile at time $t$.
	Then, we have that
	\begin{align*}
		2\epsilon
			& = \max_{\vec x \in \mathcal X} \vec x^\top A \bar y - \min_{\vec y \in \mathcal Y} \bar x^\top A \vec y \\
			& = \max_{\vec x \in \mathcal X} \langle A \bar y, \vec x\rangle + \max_{\vec y \in \mathcal Y} \langle-A \bar x, \vec y\rangle \\
			& = \max_{\vec x \in \mathcal X} \sum_{t = 1}^T \langle A \vec y^{(t)}, \vec x\rangle/T + \max_{\vec y \in \mathcal Y} \sum_{t = 1}^T \langle-A \vec x^{(t)}, \vec y\rangle/T \\
			& = \left(\max_{\vec x \in \mathcal X} \sum_{t = 1}^T \langle A \vec y^{(t)}, \vec x\rangle - \sum_{t = 1}^T \langle A \vec y^{(t)}, \vec x^{(t)}\rangle\right)/T \\
			& \qquad + \left(\max_{\vec y \in \mathcal Y} \sum_{t = 1}^T \langle-A \vec x^{(t)}, \vec y\rangle - \sum_{t = 1}^T \langle -A \vec x^{(t)}, \vec y^{(t)}\rangle\right)/T.
	\end{align*}
	Let $W_\mathcal X^{(T)}$ and $W_\mathcal Y^{(T)}$ be the regret incurred by the regret minimizers operating over the row and column sequence-form polytopes until time $T$, and let $\mathcal J_\mathcal X$ and $\mathcal J_\mathcal Y$ be the decision points in the row and column TFSDPs.
	By Proposition~\ref{ppn:sequence-form-polytope}, we have that
	\begin{align*}
		2\epsilon
			& = W_\mathcal X^{(T)}/T + W_\mathcal Y^{(T)}/T \\
			& \le \left(\sum_{j \in \mathcal J_\mathcal X} \max\{0, R_j^{(T)}\} + T|\mathcal J_\mathcal X|(U - L)\right)/T \\
			& \qquad + \left(\sum_{j \in \mathcal J_\mathcal Y} \max\{0, R_j^{(T)}\} + T|\mathcal J_\mathcal Y|(U - L)\right)/T \\
			& = \sum_{j \in \mathcal J_\mathcal X \cup \mathcal J_\mathcal Y} \max\{0, R_j^{(T)}\}/T + |\mathcal J_\mathcal X \cup \mathcal J_\mathcal Y|(U - L),
	\end{align*}
	where $L$ and $U$ are the minimum and maximum perturbations.
	Since decision points and infosets are equivalent and $L = -\delta\gamma/(1 - \gamma), U = \delta$, we have that
	\[
		2\epsilon \le \sum_{I \in \mathcal I} \max\{0, R_I^{(T)}\}/T + \delta|\mathcal I|/(1 - \gamma),
	\]
	as required.
\end{proof}

\subsection{Proof of Corollary~\ref{cll:cfr}}

\begin{proof}
	Let $R_I^{(T)}$ be the regret incurred by the component of the P/RM at the decision point corresponding to infoset $I$ until time $T$.
	By Theorem~\ref{thm:eps-ne}, we have that the average strategy profile is a $2\epsilon$-NE, where
	\[
		2\epsilon \le \sum_{I \in \mathcal I} \max\{0, R_I^{(T)}\}/T + \delta|\mathcal I|/(1 - \gamma).
	\]
	The type of local regret minimizers used in CFR is regret matching, which satisfies $R_I^{(T)} \le \Delta'\sqrt{|A(I)|}\sqrt{T}$, where $\Delta'$ is the range of utilities observed by the local regret minimizer~\cite{cesabianchiandlugosi}.
	Since the component of a P/RM observes perturbed utilities, we have that $\Delta' \le \Delta + \delta/(1 - \gamma)$.
	Also, since $|A(I)| \le m$ for any $I \in \mathcal I$, we have that the average strategy profile is a $2\epsilon$-NE where
	\[
		2\epsilon \le |\mathcal I|(\Delta + \delta/(1 - \gamma))\sqrt m/\sqrt T + \delta|\mathcal I|/(1 - \gamma),
	\]
	as required.
\end{proof}

\section{Game Descriptions}
\label{sec:game-descriptions}

We tested several CFR variants and their watermarked counterparts using two games: Goofspiel-7 and Battleship-3x2-22-3.
These are common benchmark games that are codified in OpenSpiel 1.6.15~\cite{lanctot}, a popular computational game-theory library.
Goofspiel-7 is the imperfect-information variant of the game of Goofspiel, played with seven cards and descending point order.
Battleship-3x2-22-3 is simply the game of battleship~\cite{farinaetal2019b} played on a 3-by-2 board with two ships of size 2 and value 4, where every player fires three shots.
Goofspiel-7 has about 4.8$\times$10\textsuperscript{7} nodes in its game tree, whereas battleship-3x2-22-3 has around 2.7$\times$10\textsuperscript{7} nodes.

\section{Omitted Data from the Experiments}
\label{sec:omitted-data}

\begin{table}[t]
	\centering
	\begin{tabular}{l|cc}
	\toprule
		Variant & Goofspiel-7 & Battleship-3x2-22-3 \\
		\midrule
		\textbf{WM}-CFR & \cellcolor{RedOrange} $-$7.93$\times$10\textsuperscript{$-$4} & \cellcolor{green} 7.49$\times$10\textsuperscript{$-$4} \\
		{*}\textbf{WM}-CFR+ & \cellcolor{green} 4.34$\times$10\textsuperscript{$-$5} & \cellcolor{green} 5.85$\times$10\textsuperscript{$-$4} \\
		\textbf{WM}-DCFR & \cellcolor{green} 1.92$\times$10\textsuperscript{$-$4} & \cellcolor{RedOrange} $-$7.40$\times$10\textsuperscript{$-$4} \\
		\bottomrule
	\end{tabular}
	\caption{
		Expected utilities of the solutions obtained from watermarked CFR variants against their non-watermarked counterparts.
		The row marked with `*' contains the same data shown in the identically marked rows in Table~\ref{tab:ablations2}.
	}
	\label{tab:experiments2}
\end{table}

\begin{table}[t!]
	\centering
	\begin{tabular}{c|c||cc}
	\toprule
		$\gamma$ & $\delta$ & Goofspiel-7 & Battleship-3x2-22-3 \\
		\midrule
		0.1 & \multirow{5}{*}{10\textsuperscript{$-$5}} & \cellcolor{green} 6.10$\times$10\textsuperscript{$-$5} & \cellcolor{green} 1.27$\times$10\textsuperscript{$-$4} \\
		{*}0.25 & & \cellcolor{green} 4.34$\times$10\textsuperscript{$-$5} & \cellcolor{green} 5.85$\times$10\textsuperscript{$-$4} \\
		0.5 & & \cellcolor{green} 1.26$\times$10\textsuperscript{$-$5} & \cellcolor{green} 5.55$\times$10\textsuperscript{$-$4} \\
		0.75 & & \cellcolor{green} 5.83$\times$10\textsuperscript{$-$6} & \cellcolor{green} 1.72$\times$10\textsuperscript{$-$4} \\
		0.9 & & \cellcolor{green} 4.62$\times$10\textsuperscript{$-$6} & \cellcolor{green} 2.60$\times$10\textsuperscript{$-$5} \\
		\midrule
		\multirow{5}{*}{0.25} & 0.1 & \cellcolor{green} 4.75$\times$10\textsuperscript{$-$3} & \cellcolor{RedOrange} $-$8.88$\times$10\textsuperscript{$-$2} \\
		& 0.01 & \cellcolor{RedOrange} $-$2.95$\times$10\textsuperscript{$-$5} & \cellcolor{RedOrange} $-$1.35$\times$10\textsuperscript{$-$1} \\
		& 10\textsuperscript{$-$3} & \cellcolor{green} 1.11$\times$10\textsuperscript{$-$4} & \cellcolor{green} 3.74$\times$10\textsuperscript{$-$2} \\
		& {*}10\textsuperscript{$-$5} & \cellcolor{green} 4.34$\times$10\textsuperscript{$-$5} & \cellcolor{green} 5.85$\times$10\textsuperscript{$-$4} \\
		& 10\textsuperscript{$-$8} & \cellcolor{green} 4.46$\times$10\textsuperscript{$-$5} & \cellcolor{RedOrange} $-$3.62$\times$10\textsuperscript{$-$5} \\
		\bottomrule
	\end{tabular}
	\caption{
		Expected utilities of the solutions obtained from watermarked CFR+ against that from CFR+.
		The rows marked with `*' contain the same data shown in the identically marked row in Table~\ref{tab:experiments2}.
	}
	\label{tab:ablations2}
\end{table}

The expected utilities of the watermarked solutions from the experiments against their non-watermarked counterparts are tabulated in Table~\ref{tab:experiments2}, and the expected utilities of the watermarked solutions from our ablation studies against that from CFR\textsuperscript{+} are tabulated in Table~\ref{tab:experiments2}.
In general, watermarking does not seem to have much of an effect on the expected utilities of the solutions against those produced from the non-watermarked CFR variants, except when $\delta$ is large enough.

\subsection{ROC Curves}
\label{sec:roc}

\begin{figure}[t!]
	\centering
	\includegraphics[width=\columnwidth]{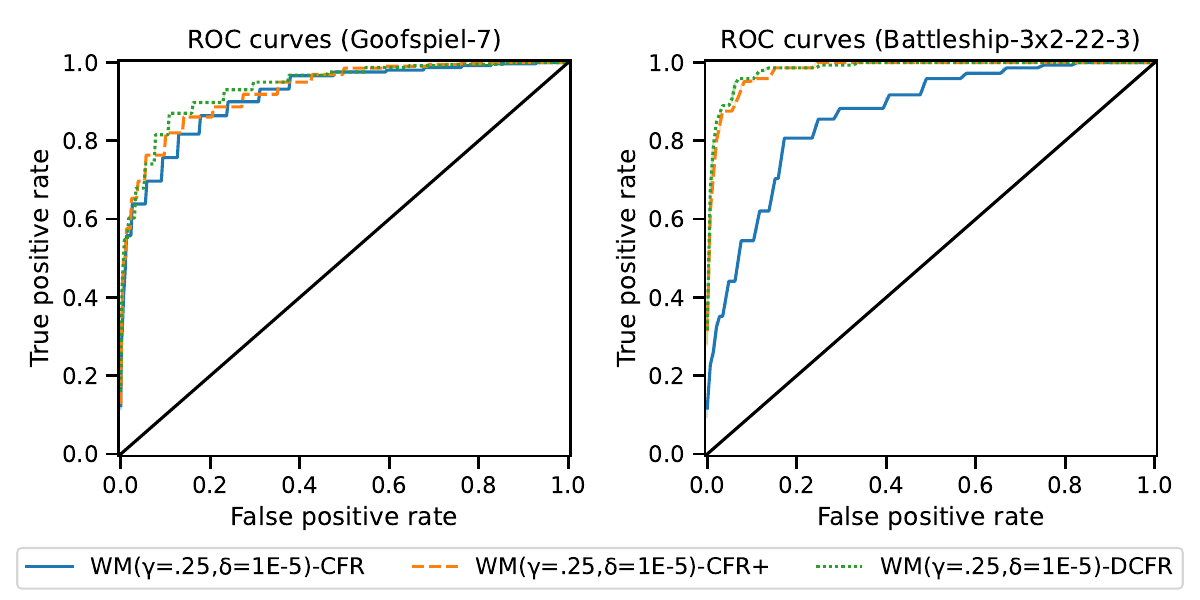}
	\includegraphics[width=\columnwidth]{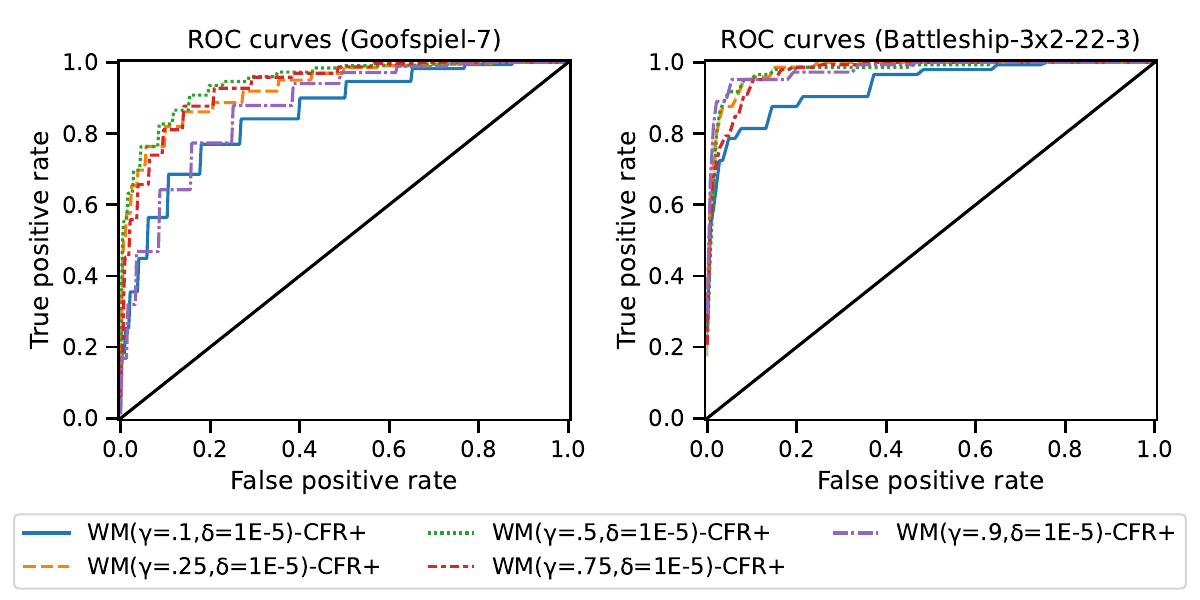}
	\includegraphics[width=\columnwidth]{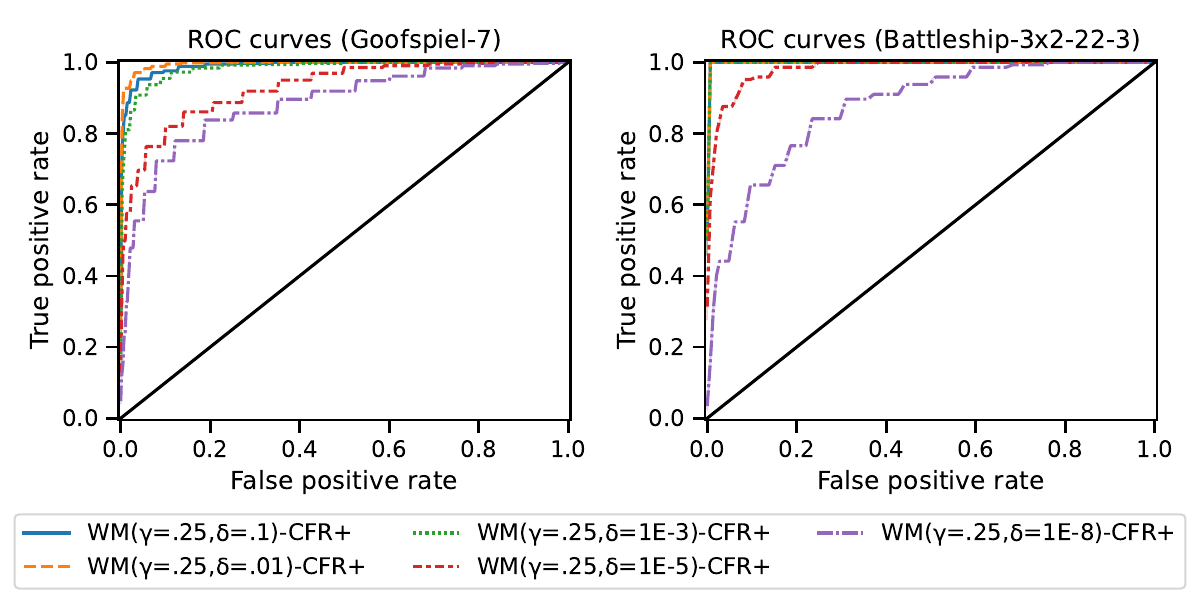}
	\caption{
		Rate of characteristic curves for the watermarked agents in the experiments and ablation studies.
	}
	\label{fig:roc}
\end{figure}

\begin{table}[t]
	\centering
	\begin{tabular}{l|cc}
	\toprule
		Variant & Goofspiel-7 & Battleship-3x2-22-3 \\
		\midrule
		\textbf{WM}-CFR & 0.917 & 0.868 \\
		{*}\textbf{WM}-CFR+ & 0.929 & 0.982 \\
		\textbf{WM}-DCFR & 0.937 & 0.985 \\
		\bottomrule
	\end{tabular}
	\caption{
		AUC values of the ROC curves for the watermarked CFR variants in our main experiments (not ablations).
		The row marked with `*' contains the same data shown in the identically marked rows in Table~\ref{tab:roc2}.
	}
	\label{tab:roc}
\end{table}

\begin{table}[t]
	\centering
	\begin{tabular}{c|c||cc}
	\toprule
		$\gamma$ & $\delta$ & Goofspiel-7 & Battleship-3x2-22-3 \\
		\midrule
		0.1 & \multirow{5}{*}{10\textsuperscript{$-$5}} & 0.855 & 0.935 \\
		{*}0.25 & & 0.929 & 0.982 \\
		0.5 & & 0.946 & 0.977 \\
		0.75 & & 0.931 & 0.974 \\
		0.9 & & 0.872 & 0.980 \\
		\midrule
		\multirow{5}{*}{0.25} & 0.1 & 0.991 & 1.000 \\
		& 0.01 & 0.995 & 1.000 \\
		& 10\textsuperscript{$-$3} & 0.983 & 1.000 \\
		& {*}10\textsuperscript{$-$5} & 0.929 & 0.982 \\
		& 10\textsuperscript{$-$8} & 0.884 & 0.875 \\
		\bottomrule
	\end{tabular}
	\caption{
		AUC values of the ROC curves for watermarked CFR\textsuperscript{+} in our ablation studies.
		The row marked with `*' contains the same data shown in the identically marked rows in Table~\ref{tab:roc}.
	}
	\label{tab:roc2}
\end{table}

From the play data of each watermarked engine, we batched every non-overlapping sequence of 100 actions.
We chose the number 100, as we deemed this was enough to detect the watermark when more practical CFR variants (\textit{e.g.}, CFR\textsuperscript{+} and DCFR) are used.
Then, we swept through z-thresholds and plotted the rate of characteristic (ROC) curves in Figure~\ref{fig:roc}.
Their area under curve (AUC) values are tabulated in Tables~\ref{tab:roc} and~\ref{tab:roc2}.
For more practical CFR variants, we see that the AUC values are above 0.9, indicating strong discrimination.
In our ablations, middling values of $\gamma$ and higher values of $\delta$ lead to higher AUC values.

\section{Lossless P/RM}
\label{sec:lossless-perturbed-regret-minimizer}

\begin{figure}[t!]
	\centering
	\begin{tikzpicture}
		\node[draw, dashed, minimum width=17em, minimum height=5em, very thick] (boundary) at (0em, 0em) {};
		\node (utility) at (-11em, 0) {$\vec u^{(t - 1)}$};
		\node (delta) at (-7em, 5em) {$\vec w$};
		\node[diamond, draw, minimum size=1em, thick, inner sep=0em] (gate) at (-7em, 0em) {};
		\node[draw, minimum width=4em, minimum height=2em, thick] (regret minimizer) at (4em, 0em) {$\mathfrak R$};
		\node (decision) at (11em, 0em) {$\vec x^{(t)}$};

		\draw (delta) edge[->, color=Green, ultra thick] (gate);
		\begin{scope}
			\clip (-7em, 5em) -- (-7em, 0em) -- (-5em, 5em) -- (-5em, 0em) -- cycle;
			\draw (delta) edge[->, color=Red, ultra thick] (gate);
		\end{scope}
		\draw (utility) edge[->, color=orange, very thick] (gate);
		\draw (gate) edge[->, color=orange, very thick] node[above, color=black] {$\vec u^{(t - 1)} + \omega^t \vec w$} (regret minimizer);
		\draw (regret minimizer) edge[color=blue, very thick] (7.5em, 0em);
		\draw (regret minimizer) edge[->, color=blue, very thick] (decision);
	\end{tikzpicture}
	\caption{
		Regret circuit diagram of a lossless P/RM with $\mathfrak R$ as its component and perturbation $\vec w$.
	}
	\label{fig:lossless-perturbed-regret-minimizer}
\end{figure}

\begin{algorithm}[t!]
	\caption{Lossless regret minimizer under perturbation (P/RM)}
	\label{alg:lossless-perturbed-regret-minimizer}
	\textbf{Input:} A component regret minimizer $\mathfrak R$ operating over $\Delta^m$, a perturbation $\vec w$, and a discount rate $\omega \in (0, 1)$.
	\BlankLine
	\textsc{NextStrategy}$()$: \\
	\Indp{
		\Return $\mathfrak R$.\textsc{NextStrategy}$()$\;
	}
	\Indm
	\BlankLine
	\textsc{ObserveUtility}$\left(\vec u^{(t)} \in \mathbb R^m\right)$: \\
	\Indp{
		$\mathfrak R$.\textsc{ObserveUtility}$(\vec u^{(t)} + \omega^t \vec w)$\;
	}
	\Indm
\end{algorithm}

In this section, we formally define a \textit{lossless regret minimizer under perturbation (P/RM)}, which is identical to a P/RM except that it accepts an extra discount rate parameter $\omega \in (0, 1)$ and the perturbation is no longer static.
Its regret circuit diagram is given in Figure~\ref{fig:lossless-perturbed-regret-minimizer}, and its pseudocode is given in Algorithm~\ref{alg:lossless-perturbed-regret-minimizer}.
Just like its lossy counterpart, at time $t \in \mathbb N^+$, the lossless P/RM forwards the output of the component regret minimizer as its own.
However, when the lossless P/RM observes a utility vector $\vec u^{(t)}$, it adds a \textit{discounted} perturbation $\omega^t \vec w$ to it.
Finally, the perturbed utility $\vec u^{(t)} + \omega^t \vec w$ is forwarded to the component regret minimizer for its observation.

In the following proposition, we show that a lossless P/RM incurs sublinear average regret.

\begin{proposition}
	Let $\mathfrak W$ be a lossless P/RM with a component regret minimizer $\mathfrak R$ operating over $\Delta^m$, a perturbation in $[L, U]^m$, and a discount rate $\omega \in (0, 1)$.
	Further, let $W^{(T)}$ and $R^{(T)}$ be the regret incurred by $\mathfrak W$ and $\mathfrak R$, respectively, until time $T$.
	Then,
	\[
		W^{(T)} \le R^{(T)} + \omega (U - L)/(1 - \omega).
	\]
	\label{ppn:lossless-probability-simplex}
\end{proposition}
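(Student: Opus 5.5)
The plan is to repeat the argument of Proposition~\ref{ppn:probability-simplex}, replacing the static perturbation $\vec w$ by the time-varying perturbation $\omega^t \vec w$. Write $\vec x^{(t)}$ for the common output of $\mathfrak W$ and $\mathfrak R$. By construction, $\mathfrak R$ observes $\vec u^{(t)} + \omega^t\vec w$, so $R^{(T)}$ is the regret measured against these perturbed utilities. In the definition of $W^{(T)}$, I will add and subtract $\omega^t\vec w$ inside every inner product:
\[
	W^{(T)} = \max_{\vec x \in \Delta^m} \sum_{t=1}^T \langle \vec u^{(t)} + \omega^t\vec w - \omega^t \vec w, \vec x\rangle - \sum_{t=1}^T \langle \vec u^{(t)} + \omega^t\vec w - \omega^t\vec w, \vec x^{(t)}\rangle .
\]
Subadditivity of the maximum then separates this into the perturbed part and the correction part. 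The perturbed part is exactly $R^{(T)}$. The correction part is
\[
	\max_{\vec x\in\Delta^m} \sum_{t=1}^T \omega^t\langle -\vec w, \vec x\rangle + \sum_{t=1}^T \omega^t\langle \vec w, \vec x^{(t)}\rangle .
\]

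Next I bound the correction part termwise. Every $\vec x\in\Delta^m$ is a probability vector and every entry of $\vec w$ lies in $[L,U]$, so $\langle\vec w,\vec x\rangle\in[L,U]$. Because $\omega^t>0$, the maximum of the first sum is at most $-L\sum_t \omega^t$, and the second sum is at most $U\sum_t\omega^t$. The correction part is therefore at most $(U-L)\sum_{t=1}^T\omega^t$.

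Finally I sum the geometric series. Since $U-L\ge 0$, I have $(U-L)\sum_{t=1}^T\omega^t \le (U-L)\sum_{t=1}^\infty \omega^t = (U-L)\,\omega/(1-\omega)$, which gives $W^{(T)}\le R^{(T)}+\omega(U-L)/(1-\omega)$. This bound is independent of $T$, so the regret overhead is sublinear, and in fact constant.

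There is no real obstacle here. The only points needing care are the indexing of the discount, since the series starts at $t=1$ and this produces the factor $\omega$ in the numerator, and the positivity of $\omega^t$. That positivity is what allows the bound on a single step, $\langle\vec w,\vec x\rangle\in[L,U]$, to be scaled by $\omega^t$ without flipping the inequality.
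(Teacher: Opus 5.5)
Your proposal is correct and follows essentially the same argument as the paper. You add and subtract $\omega^t\vec w$, use subadditivity of the maximum to isolate $R^{(T)}$, bound the correction term by $(U-L)\sum_{t=1}^T\omega^t$, and dominate that finite geometric sum by $\omega/(1-\omega)$.
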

\begin{proof}
	Let $\vec w$ be the perturbation.
	Then, we have that
	\begin{align*}
		W^{(T)}
			& = \max_{\vec x \in \Delta^m} \sum_{t = 1}^T \langle\vec u^{(t)}, \vec x\rangle - \sum_{t = 1}^T \langle\vec u^{(t)}, \vec x^{(t)}\rangle \\
			& = \max_{\vec x \in \Delta^m} \sum_{t = 1}^T \langle\vec u^{(t)} + \omega^t \vec w - \omega^t \vec w, \vec x\rangle \\
			& \qquad - \sum_{t = 1}^T \langle\vec u^{(t)} + \omega^t \vec w - \omega^t \vec w, \vec x^{(t)}\rangle \\
			& \le \max_{\vec x \in \Delta^m} \sum_{t = 1}^T \langle\vec u^{(t)} + \omega^t \vec w, \vec x\rangle + \max_{\vec x \in \Delta^m} \sum_{t = 1}^T \langle-\omega^t \vec w, \vec x\rangle \\
			& \qquad - \sum_{t = 1}^T \langle\vec u^{(t)} + \omega^t \vec w, \vec x^{(t)}\rangle - \sum_{t = 1}^T \langle-\omega^t \vec w, \vec x^{(t)}\rangle \\
			& = R^{(T)} + \max_{\vec x \in \Delta^m} \sum_{t = 1}^T \langle-\omega^t \vec w, \vec x\rangle - \sum_{t = 1}^T \langle-\omega^t \vec w, \vec x^{(t)}\rangle \\
			& \le R^{(T)} - \sum_{t = 1}^T \omega^t \min_{\vec x \in \Delta^m} \langle\vec w, \vec x\rangle + \sum_{t = 1}^T \omega^t \langle\vec w, \vec x^{(t)}\rangle \\
			& \le R^{(T)} - \sum_{t = 1}^T \omega^t L + \sum_{t = 1}^T \omega^t U \\
			& \le R^{(T)} + \omega (U - L)/(1 - \omega),
	\end{align*}
	as required.
\end{proof}
\begin{remark}
	Assuming $\mathfrak R$ is a regret minimizer, $R^{(T)} = o(T)$, so we have that $W^{(T)} = o(T)$ as well.
	Thus, when lossless P/RMs are used, the resulting perturbed regret minimization algorithm converges to NE when solving 2p0s games.
	However, the existence of a lossless P/RM does not necessarily imply that lossless watermarking is possible.
	Indeed, it is unknown whether this regret minimizer can perform `equilibrium selection' in that the average strategy converges to an NE that plays \textcolor{Green}{\textbf{green}} actions more frequently than others.
	This connects to the open question raised by~\citet{brownandsandholm2017} on which particular equilibrium CFR converges to.\footnote{See the footnote in~\citet[page 290]{brownandsandholm2017}.}
	Also, the equilibrium space may not be large enough to contain a desired `watermarked' strategy profile anyway, although we do know that even toy games of moderate sizes like Kuhn poker~\cite{kuhn} and Leduc poker~\cite{southeyetal} contain a rich space of NE.
\end{remark}

\section{Testbench specification}

Our testbench has an AMD Ryzen 9 3900X 12-core, 24-thread CPU and 128 GB of memory.

\end{document}